\documentclass[11pt]{article}
\usepackage[margin=1in]{geometry}
\usepackage[colorlinks=true,
            linkcolor=black,
            citecolor=black,
            urlcolor=black]{hyperref}

\usepackage{setspace}

\usepackage{subcaption}

\usepackage{fixmath}
\usepackage{bm}
\usepackage{amsbsy}
\usepackage{color}
\usepackage{verbatim}
\usepackage{multirow}
\usepackage{amssymb}
\usepackage{amsthm}
\usepackage{amsmath}
\usepackage{array}
\usepackage{mathtools}

\allowdisplaybreaks

\usepackage{graphicx}
\usepackage{tikz}
\usetikzlibrary{positioning}
\usepackage{xcolor}

\usepackage{thm-restate}
\usepackage{algpseudocode}
\usepackage{algorithm}
\usepackage{algorithmicx}

\newtheorem{theorem}{Theorem}
\newtheorem{lemma}{Lemma}

\newtheorem{definition}{Definition}

\newtheorem{remark}{Remark}

\usepackage{acro}
\usepackage{natbib}
\usepackage{booktabs}
\usepackage{makecell}

\newcommand{\size}[1]{\ensuremath{|#1|}}
\newcommand{\lra}[1]{\ensuremath{(#1)}}
\newcommand{\lrc}[1]{\ensuremath{\{#1\}}}
\newcommand{\lrA}[1]{\ensuremath{\left(#1\right)}}
\newcommand{\lrB}[1]{\ensuremath{\left[#1\right]}}
\newcommand{\lrC}[1]{\ensuremath{\left\{#1\right\}}}
\def\A{\mathcal{A}}
\def\F{\mathcal{F}}
\def\O{\mathcal{O}}

\def\T{\mathcal{T}}

\newcommand{\MSF}{\mathrm{msf}}
\newcommand{\pos}[1]{\lrA{#1}_{+}}
\allowdisplaybreaks

\newtheorem{property}{Property}
\def\OPT{\mathrm{opt}}
\def\Odd{\mathrm{Odd}}
\DeclareAcronym{FPT}{
short = FPT,
long = fixed-parameter tractable
}
\DeclareAcronym{LP}{
short = LP,
long = linear programming
}
\DeclareAcronym{MDTSP}{
short = MD-TSP,
alt = multiple-depot TSP,
long = multiple-depot traveling salesman problem
}
\DeclareAcronym{MST}{
short = MST,
long = minimum-cost spanning tree
}
\DeclareAcronym{RPP}{
short = RPP,
long = rural postman problem
}
\DeclareAcronym{RSF}{
short = RSF,
long = rooted spanning forest
}
\DeclareAcronym{TSP}{
short = TSP,
long = traveling salesman problem
}
\DeclareAcronym{XP}{
short = XP,
long = slice-wise polynomial-time
}

\title{A $(1+1/\sqrt{2})$-Approximation for the Multiple-Depot Traveling Salesman Problem}
\author{Jingyang Zhao$^{1}$, Yuxi Liu$^{2}$, Mingyu Xiao$^{2}$\thanks{Corresponding author.} \\
$^{1}$Kyung Hee University \\
$^{2}$University of Electronic Science and Technology of China
}
\date{}
\begin{document}
\maketitle
\begin{abstract}
The metric traveling salesman problem (TSP) is a fundamental problem in combinatorial optimization that asks for a minimum-cost tour covering all clients in a metric graph.
The metric multiple-depot TSP (MD-TSP) is a natural extension, where the graph contains depots and clients, and the objective is to compute a minimum-cost set of tours covering all clients, with each tour starting and ending at the same depot.
When the number of depots is part of the input, an adaptation of the Christofides--Serdyukov heuristic yields an approximation ratio of $2$.
In this paper, we introduce a $(1+1/\sqrt{2})$-approximation algorithm. 
Like the Christofides--Serdyukov heuristic, our algorithm first computes a rooted spanning forest (RSF), then a matching to correct its odd degrees, and finally obtains a solution by shortcutting.
However, instead of using a minimum-cost RSF, we construct an RSF by a primal-dual algorithm for a natural cut relaxation.
The algorithm grows rootless components and the component containing all depots at different rates, adding an edge when its dual constraint becomes tight.
Vertex labels record the times at which clients first become connected to a depot.
The two-speed growth provides a joint bound on the forest cost and two label-dependent terms that also arise in bounding the parity-correction cost.
Balancing the coefficients of these two terms by setting both to $\sqrt{2}-1$ yields the claimed approximation ratio.
\end{abstract}

\section{Introduction}\label{sec:introduction}
The metric \ac{TSP} is one of the most fundamental problems in combinatorial optimization~\citep{GA:06:BOOK}.
It asks for a minimum-cost tour visiting all vertices of a graph.
The problem has been extensively studied since the classical work of \citet{dantzig1954solution} and arises in applications ranging from transportation planning to the ordering of drilling operations on circuit boards~\citep{ABC:2006:book}.
In these applications, vertices represent locations or tasks, and edge costs represent the travel needed between them.

The metric \aca{MDTSP} (\ac{MDTSP}) extends this problem to multiple depots. 
An instance is a complete graph $G=(V,E,c)$, where $V=J\cup D$ and $J\cap D=\emptyset$.
Here, $J=\{v_1,\ldots,v_n\}$ is the set of clients, $D=\{u_1,\ldots,u_k\}$ is the set of depots with $k\geq1$, and $c:V\times V\to\mathbb{R}_{\geq0}$ is a metric edge cost function satisfying $c(x,x)=0$, $c(x,y)=c(y,x)$, and the triangle inequality $c(x,z)\leq c(x,y)+c(y,z)$ for all $x,y,z\in V$.
The goal is to find a minimum-cost set of tours covering all clients, with each tour starting and ending at the same depot.
Depots may remain unused.
The multiple-depot setting captures situations in which vehicles or service teams start from different locations and must collectively serve a set of clients.
For example, in inspection planning, depots can represent vehicle bases and clients can represent inspection sites.
Related multiple-depot vehicle routing models arise in task allocation for unmanned aerial vehicles~\citep{rathinam2007resource,JJSHN:15:CIE}.

The metric \ac{TSP} is APX-hard even when all edge costs between distinct vertices are either one or two~\citep{papadimitriou1993distances}.
Since the metric \ac{MDTSP} contains the metric \ac{TSP} as the case $k=1$, it is also APX-hard.
This motivates approximation algorithms that run in polynomial time and provide provable bounds on solution cost.
Such algorithms have been extensively studied for metric \acp{TSP}~\citep{TV:2024:book,SJA:25:AOR}.

Unless otherwise specified, in the following, we consider problems in the metric setting.

For the \ac{TSP}, the double-tree heuristic~\citep{DRP:77:SICOMP} yields an approximation ratio of $2$.
\citet{christofides1976worst} and \citet{serdyukov1978some} proposed the classical $3/2$-approximation algorithm, which first computes a \ac{MST}, then a minimum-cost perfect matching on its odd-degree vertices, and finally obtains a solution by shortcutting.
More recently, \citet{KKG:21:STOC,DBLP:conf/ipco/KarlinKG23} improved the approximation ratio to $3/2-\varepsilon$ for some constant $\varepsilon>10^{-36}$.

For the \ac{MDTSP}, adapting the double-tree heuristic yields a $2$-approximation algorithm~\citep{rathinam2007resource}.
By extending the Christofides--Serdyukov heuristic, \citet{xu2011analysis} improved this ratio to $2-1/k$ for $k\geq2$, with running time $\O(\size{V}^3)$.
Their algorithm replaces the \ac{MST} with a minimum-cost \ac{RSF}, a spanning forest with one depot in each tree, and then applies parity correction and shortcutting.
They also showed that the ratio $2-1/k$ is tight for this algorithm.
Moreover, for a generalization in which only $s$ of the $k$ available depots may be selected to serve clients, \citet{DBLP:journals/eor/XuR17} proposed a $(2-1/(2s))$-approximation algorithm.

Although the best-known approximation ratio remains $2$ in the general setting, better guarantees can be achieved by allowing the running time to depend on the number of depots $k$.
Using edge exchanges, \citet{xu20153} proposed a $3/2$-approximation algorithm with running time $\O(\size{V}^{k+2}\size{E}^{k-1})$.
This running time is polynomial for any fixed $k$, so the algorithm belongs to \ac{XP} when parameterized by the number of depots~\citep{CFK:2015:book}.
Later, \citet{DBLP:journals/siamcomp/TraubVZ22} proposed the $\Phi$-\ac{TSP} framework for a class of \ac{TSP} variants.
For the \ac{MDTSP}, it yields an $(\alpha+\varepsilon)$-approximation algorithm with running time $\size{V}^{\O_\varepsilon(k)}$ for any constant $\varepsilon>0$, where $\alpha<3/2$ is the best-known ratio for the \ac{TSP}.
Subsequently, \citet{deppert20233} proposed a randomized \ac{FPT} $(3/2+\varepsilon)$-approximation algorithm with running time $(1/\varepsilon)^{\O(k\log k)}\size{V}^{\O(1)}$.
Their algorithm reduces the \ac{MDTSP} to the \ac{RPP} and applies parameterized algorithms for the \ac{RPP}~\citep{DBLP:journals/jcss/GutinWY17}.
Recently, \citet{DBLP:journals/corr/abs-2601-01841} obtained an \ac{FPT} $(\alpha+\varepsilon)$-approximation algorithm with running time $2^{\O(k/\varepsilon)}\size{V}^{\O(1)}$.
These results provide stronger ratios when the number of depots is small.
However, it remained open whether a polynomial-time algorithm could achieve a ratio better than $2$ when $k$ is part of the input.

\subsection{Our Results}
We propose an $\O(\size{V}^3)$-time $(1+1/\sqrt2)$-approximation algorithm for the \ac{MDTSP}, breaking the 2-approximation barrier.

Like the Christofides--Serdyukov-style $(2-1/k)$-approximation algorithm of \citet{xu2011analysis}, our algorithm first constructs an \ac{RSF} $\F$, then corrects its odd degrees by a minimum-cost perfect matching $M$, and finally obtains a solution to the \ac{MDTSP} by shortcutting $E(\F)\uplus M$.
The main difference lies in how we construct the forest and bound its cost together with the matching cost.

Fix $\lambda\in[0,1]$.
Following the primal-dual approach of \citet{DBLP:journals/siamcomp/GoemansW95}, we construct $\F$ by growing dual variables of a natural cut relaxation.
At time $s=0$, all depots are treated as a single root component.
Each rootless component grows at rate $1/2$, while the root component grows at rate $\eta/2$, where $\eta=\frac{1-\lambda}{1+\lambda}$.
An edge is selected when its dual constraint becomes tight under the original edge costs $c$.
When a component first becomes connected to a depot, all its vertices receive the current time as their label $\ell(v)$.
The labels also yield a weighted-forest interpretation of the construction.
Under $w=c+\lambda\theta_\ell$, where $\theta_\ell(uv)=\max\{c(uv)-\min\{\ell(u),\ell(v)\},0\}$, the forest $\F$ is a minimum-weight \ac{RSF}, and each selected edge has weight equal to its event time.

The purpose of the two rates is to bound the forest and matching costs together.
Consider an optimal solution $\T^*=\{T_1^*,\ldots,T_k^*\}$ of cost $\OPT$, whose tours partition $V$ into sets $V_i=V(T_i^*)$.
The growth of rootless components accounts for $c(\F)+\lambda\theta_\ell(\F)$, while that of the root component accounts for $\eta\sum_{i=1}^k\max_{v\in V_i}\ell(v)$.
Using dual feasibility, we obtain $c(\F)+\lambda\theta_\ell(\F)+\eta\sum_{i=1}^k\max_{v\in V_i}\ell(v)\leq\OPT$.

To bound the matching cost, we retain some forest edges joining different sets $V_i$ and then correct the degree parities within each $V_i$.
This gives $c(M)\leq\frac12\OPT+\sum_{i=1}^k\max_{v\in V_i}\ell(v)+\theta_\ell(\F)$.
The two additional terms in this bound have coefficients $\eta$ and $\lambda$, respectively, in the dual budget.
Balancing these coefficients by setting $\lambda=\eta=\sqrt2-1$ yields $c(\F)+c(M)\leq(3/2+\sqrt2)\OPT-\sqrt2c(\F)$.
Together with the bound $c(\F)+c(M)\leq2c(\F)$, this proves the approximation ratio $1+1/\sqrt2$.

\section{Preliminaries}\label{sec:preliminaries}
As defined earlier, an instance of the \ac{MDTSP} is denoted by a metric graph $G=(V,E,c)$.

A \emph{walk} is a sequence of vertices in which consecutive vertices are joined by an edge.
For a walk or multigraph $H$, let $V(H)$, $E(H)$, and $c(H)$ denote its vertex set, edge multiset, and total edge cost, respectively.
Edge sums and cut intersections involving a multigraph count edge multiplicities.
For $S\subseteq V$, let $\delta(S)$ denote the set of edges with exactly one endpoint in $S$, and write $\delta(v)=\delta(\{v\})$.
A \emph{tour} is a closed walk beginning and ending at a depot, and a \emph{trivial} tour consists of a single depot and has cost zero.
By the triangle inequality, repeated clients and intermediate depots can be removed by \emph{shortcutting} without increasing its cost.
Thus, after shortcutting, a tour serving a single client uses two copies of the edge to its depot.

\begin{definition}[The \ac{MDTSP}]
Given $G=(V,E,c)$, the goal is to find tours $\T=\{T_u\}_{u\in D}$ covering all clients, with each $T_u$ starting and ending at $u$, that minimize $c(\T)=\sum_{u\in D}c(T_u)$.
\end{definition}

A feasible solution can be represented as a spanning even-degree multigraph with at least one depot in each component.
Let $\T^*$ be an optimal solution with cost $\OPT=c(\T^*)$.
By assigning each client to one visiting tour and shortcutting all other visits, we may assume that its tours $T_1^*,\ldots,T_k^*$ are vertex-disjoint, each containing exactly one depot.
Thus, the sets $V_i=V(T_i^*)$ partition $V$.

For a multigraph $H$, let $\Odd(H)$ denote its set of odd-degree vertices.
For an even-cardinality vertex set $U$, a \emph{$U$-join} is an edge multiset whose odd-degree vertex set is exactly $U$.
An \ac{RSF} is a spanning forest with exactly one depot in each component, allowing isolated depots.
For nonnegative edge weights $w$, let $\MSF_w(D)$ denote the minimum weight of such a forest.
Let $\pos{x}=\max\lrc{x,0}$. Then, for vertex labels $\ell:V\to\mathbb{R}_{\geq0}$ satisfying $\ell(u)=0$ for all $u\in D$, define
\begin{equation}\label{eqtheta}
\theta_\ell(uv)=\pos{c(uv)-\min\lrc{\ell(u),\ell(v)}}.
\end{equation}

\section{A Primal-Dual Algorithm for the \ac{MDTSP}}\label{sec:algorithm}
The algorithm first constructs an \ac{RSF} $\F$ using a primal-dual approach with a parameter $\lambda\in[0,1]$, then corrects the odd degrees of $\F$ by computing a minimum-cost perfect matching $M$ on $\Odd(\F)$, and finally obtains a solution by shortcutting $E(\F)\uplus M$.
The details are provided in Algorithm~\ref{alg1}.
By the triangle inequality,
\begin{equation}\label{eqnotations}
c(\T)\leq c(\F)+c(M).
\end{equation}

\begin{algorithm}[!t]
\small
\caption{A primal-dual approximation algorithm for the \ac{MDTSP}}
\label{alg1}
\textbf{Input:} An \ac{MDTSP} instance $G=(V,E,c)$ and a parameter $\lambda\in[0,1]$.\\
\textbf{Output:} A feasible solution $\T$.
\begin{algorithmic}[1]
\State Initialize $\T\coloneq\emptyset$.
\State Apply the primal-dual construction in Section~\ref{sec:sweep} to obtain $\F$ and $\ell$.\label{alg1l2}
\State Compute a minimum-cost perfect matching $M$ on $\Odd(\F)$ under $c$.\label{alg1l3}
\For{each nontrivial component $H$ of $E(\F)\uplus M$}\label{alg1l4}
\State Choose a depot $u\in D\cap V(H)$ and compute an Eulerian tour of $H$ starting at $u$.\label{alg1l5}
\State Shortcut repeated clients and intermediate depots to obtain $T_u$, and add $T_u$ to $\T$.\label{alg1l6}
\EndFor\label{alg1l7}
\State Add a trivial tour for each unused depot, and \Return $\T$.\label{alg1l8}
\end{algorithmic}
\end{algorithm}

For each edge $e\in E$, we introduce a variable $x_e\geq 0$, and define $x(E')=\sum_{e\in E'}x_e$ for $E'\subseteq E$. We consider the following \ac{LP} relaxation of the \ac{MDTSP}:
\begin{equation}\label{eqprimal}
\min \lrC{
\sum_{e\in E}c(e)x_e \mid
x(\delta(S))\geq 2 \quad
\forall\,\emptyset\neq S\subseteq J,\quad x_e\geq 0
}.
\end{equation}
Every feasible solution to the \ac{MDTSP} yields a feasible solution to \eqref{eqprimal} of the same cost. 
Hence, the optimal value of \eqref{eqprimal} is a lower bound on $\OPT$. Its dual is
\begin{equation}\label{eqdual}
\max\lrC{
2\sum_{\emptyset\neq S\subseteq J}y_S \mid
\sum_{\substack{\emptyset\neq S\subseteq J\\ e\in\delta(S)}}y_S
\leq c(e) \quad \forall\,e\in E,\quad y_S\geq 0
}.
\end{equation}
With respect to a dual solution, an edge is called \emph{tight} if its constraint in \eqref{eqdual} holds with equality.
By replacing the right-hand side $2$ by $1$ in \eqref{eqprimal}, we obtain a relaxation for rooted connectivity, whose dual has objective $\sum_Sy_S$.
The forest construction satisfies these connectivity requirements.

\begin{remark}
Our approximation ratio is not measured with respect to the value of \eqref{eqprimal}. 
As shown by \citet{deppert20233}, the subtour-elimination relaxation can have an integrality gap arbitrarily close to $2$.
This relaxation additionally imposes the degree constraints $x(\delta(v))=2$ for every $v\in J$, which is stronger than our relaxation.
Thus, the integrality gap of our relaxation is also at least $2$.
\end{remark}

\subsection{The Two-Speed Primal-Dual Construction}\label{sec:sweep}
We now introduce the forest construction.
Fix $\lambda\in[0,1]$, and define $\eta=\frac{1-\lambda}{1+\lambda}$.
Following the primal-dual approach of \citet{DBLP:journals/siamcomp/GoemansW95}, we describe the construction using a continuous time parameter $s\geq0$. We grow variables associated with current components until an edge becomes tight, then select that edge and merge the components.
The main difference is that we allow the root component to grow at a rate no greater than that of the rootless components.

At time $s=0$, let $\F=(V,\emptyset)$, set all dual variables to zero, and treat all depots as belonging to a single root component $R_0$, without adding edges between them.
Give each client its own rootless component, and set $\ell(u)=0$ for every depot $u\in D$; client labels are initially unassigned.
Throughout the construction, $R_0$ denotes the component containing all depots.

At time $s$, let $\A_s$ be the family of current rootless components, and let $A_s=\bigcup_{X\in\A_s}X=V\setminus R_0$.
While $A_s\neq\emptyset$, the dual variables grow at the following rates:
\begin{equation}\label{eqgrowth}
\frac{d y_S}{d s}
=\frac12\mathbf{1}_{\{S\in\A_s\}}
+\frac\eta2\mathbf{1}_{\{S=A_s\}},
\qquad \forall\ \emptyset\neq S\subseteq J,
\end{equation}
where $\mathbf{1}_{\{\cdot\}}$ is the indicator of the condition.
The first term in \eqref{eqgrowth} grows each rootless component at rate $1/2$.
Since $\delta(A_s)=\delta(R_0)$, the second term represents growth of the root component at rate $\eta/2$.
If only one rootless component $X$ remains, then $A_s=X$ and $\frac{d y_X}{d s}=\frac{1+\eta}{2}$.

Continue the growth until an edge constraint joining distinct current components first becomes tight.
Choose any such edge, add it to $\F$, and merge its components without changing the current dual variables.
This edge selection and component merge is an \emph{event}, and the current time is its \emph{event time}.
If a rootless component merges with $R_0$, assign the current time as the label of each vertex in that component; otherwise, assign no labels.
Before resuming the growth, process any remaining tight edges joining distinct current components in the same way, recomputing the components after each merge.
In particular, zero-cost edges may be selected at time zero, and several events may occur at the same time.
The construction terminates when no rootless component remains.

\begin{lemma}\label{lem:dual-growth}
The construction maintains feasibility of \eqref{eqdual}, and every selected edge remains tight.
After exactly $\size{V}-k$ merges, the selected edges form an \ac{RSF} $\F$.
\end{lemma}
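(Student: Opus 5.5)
We prove the three claims together by induction over the construction, using as invariant the state at the start of each growth phase. The invariant is that the current dual $y$ is feasible for \eqref{eqdual}, every selected edge is tight, and every current component is connected by selected edges. For $R_0$, connectivity is required only among its non-depot vertices: each such vertex is connected through selected edges to some depot. The invariant holds at $s=0$ because $y=0$, $c\geq 0$, and no edges have been selected.

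\textbf{Step 1: feasibility during growth.} Fix an edge $e=uv$ and let $f_e(s)=\sum_{S:\,e\in\delta(S)}y_S$. Since $y$ changes only at rates given by \eqref{eqgrowth}, $f_e$ is piecewise linear with nonnegative slope. If $u$ and $v$ lie in the same current component, the slope is zero. Indeed, every growing set is a rootless component or $A_s$, and both are unions of current components, so neither separates $u$ from $v$. Hence $f_e$ is frozen while $u,v$ share a component, and a tight edge, in particular a selected one, stays tight from then on. If $u$ and $v$ lie in different components, growth runs only until the first such edge becomes tight, so no constraint is ever exceeded. Merges do not change $y$, so feasibility survives events. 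The growth also only ever touches sets $S\subseteq J$: rootless components and $A_s=V\setminus R_0$ contain no depot, because all depots lie in $R_0$.

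\textbf{Step 2: progress.} We must check that some edge becomes tight in finite time whenever $A_s\neq\emptyset$. Consider any edge joining two different components. At least one endpoint lies in a rootless component $X$, so $e\in\delta(X)$ and $y_X$ grows at rate at least $1/2$. Therefore $f_e$ grows at rate at least $1/2$ and reaches $c(e)$ after at most $2c(e)$ further time. Zero-cost edges are tight immediately and are processed at the current time, which is exactly what the construction prescribes. Each event reduces the number of components by one. There are $n+1=\size{V}-k+1$ components initially (the $n$ clients plus $R_0$) and one at termination, so there are exactly $\size{V}-k$ merges. This also gives termination.

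\textbf{Step 3: forest structure.} Every selected edge joins two distinct current components, so no cycle is ever closed. The only subtlety is $R_0$, whose depots are not joined by edges. Contract all of $D$ to a single vertex $r$. Then the selected edges form a spanning tree on $J\cup\{r\}$, since there are $n$ acyclic edges on $n+1$ vertices. Uncontracting $r$ splits this tree into trees, each containing exactly one depot: a path between two depots would be a cycle through $r$ in the contracted tree, and every client is connected to $r$. Isolated depots form singleton trees. The result is therefore an \ac{RSF} with $\size{V}-k$ edges.

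The main point to state carefully is Step 1: the sets $A_s$ and the rootless components are laminar-compatible with the current components, so internal edges receive zero growth. Together with Step 2's finiteness bound, this covers the zero-cost and simultaneous-event cases.
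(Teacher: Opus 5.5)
Your proof is correct and follows the same route as the paper. Feasibility holds because growth is nonnegative and halts at the first tight edge between distinct components, and selected edges stay tight because edges inside a component receive no further dual contribution. Counting components gives $|V|-k$ merges, and contracting all depots to one vertex shows the selected edges form a tree, hence an RSF. Your additions (every growing set is a union of current components contained in $J$, and the rate-$\geq 1/2$ argument that each growth phase ends in finite time) are sound elaborations of steps the paper leaves implicit.
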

\begin{proof}
By definition, growth rates are nonnegative, and growth stops as soon as an edge joining distinct components becomes tight, preserving dual feasibility. After a merge, edges within the resulting component receive no further dual contribution, so selected edges remain tight.

Each selected edge joins two distinct components and reduces the number of rootless components by one.
Since the construction starts with $\size{V}-k$ rootless components, it performs $\size{V}-k$ merges.
After contracting all depots into a single vertex, the selected edges form a tree. 
Before contraction, they form a spanning forest with one depot in each component. Thus, $\F$ is an \ac{RSF}.
\end{proof}

\emph{The event implementation.}
Instead of explicitly maintaining the dual variables, we can compute the next tight edge from its event time.
Consider an edge $uv$ joining distinct current components at time $s$.
If both components are rootless, their endpoints have remained in distinct rootless components during $[0,s]$.
Their component variables contribute $s/2+s/2=s$ to the edge constraint.
Hence, the edge becomes tight at time $c(uv)$.

Suppose that $u$ joined $R_0$ at time $a=\ell(u)$, while $v$ is still rootless.
Before time $a$, the endpoints lie in distinct rootless components,
each contributing $a/2$ to the edge constraint.
After time $a$, only the rootless component containing $v$ and the set $A_s$ contribute, at rates $1/2$ and $\eta/2$, respectively.
Thus, the total contribution at time $s$ is $a+\frac{1+\eta}{2}(s-a)=\frac{(1+\eta)s+(1-\eta)\ell(u)}{2}$.
Let this expression equal $c(uv)$, and then we obtain the time at which $uv$ becomes tight:
\[
s=\frac{2c(uv)-(1-\eta)\ell(u)}{1+\eta}
=(1+\lambda)c(uv)-\lambda\ell(u).
\]
Thus, for an edge joining distinct current components, define its current key as
\begin{equation}\label{eqevent}
 \kappa(uv)=
 \begin{cases}
 c(uv),&u,v\notin R_0,\\
 (1+\lambda)c(uv)-\lambda \ell(u),&u\in R_0,\ v\notin R_0,\\
 (1+\lambda)c(uv)-\lambda \ell(v),&v\in R_0,\ u\notin R_0.
 \end{cases}
\end{equation}
Choose a minimum-key edge $uv$, breaking ties deterministically, and set the event time to $s=\kappa(uv)$.
Merge the two components, assign label $s$ to newly rooted vertices, and update the remaining keys using \eqref{eqevent}.
With the same tie-breaking rule, this event implementation selects exactly the same edges and labels as the dual-growth procedure.

\begin{property}\label{labelpro}
The label $\ell(v)$ records when client $v$ first becomes connected to a depot.
\end{property}

When $\lambda=0$, we have $\eta=1$, all components grow at the same rate, and the keys equal the costs; the construction simulates Kruskal's algorithm.
When $\lambda=1$, we have $\eta=0$, and only rootless components grow.

\begin{figure}[t]
\centering
\caption{The primal-dual construction for a line metric, with $\lambda=1/2$, $\eta=1/3$, and $h=9/2$.
Squares are depots; shaded vertices belong to the root component $R_0$.
In (b)--(d), thick edges are selected and carry their event times; dashed edges are unselected and carry their current keys $\kappa$.}
\label{fig:event-example}
\begin{tikzpicture}[
    x=1.35cm,y=1cm,
    client/.style={circle,draw,fill=white,minimum size=6mm,inner sep=0pt},
    depot/.style={rectangle,draw,fill=black!18,minimum size=6mm,inner sep=0pt},
    rooted/.style={fill=black!18},
    pending/.style={draw=black!40,dashed},
    chosen/.style={draw=black,line width=1.2pt},
    every node/.style={font=\small}
    ]

\begin{scope}
    \node[anchor=west] at (-.3,.8) {(a) Input: edge labels are costs $c$.};
    \node[depot] (r1) at (0,0) {$u_1$};
    \node[client] (a1) at (1,0) {$a$};
    \node[client] (b1) at (2,0) {$b$};
    \node[client] (d1) at (3,0) {$d$};
    \node[depot] (r2) at (4,0) {$u_2$};

    \draw[pending] (r1)--node[above] {$2$} (a1);
    \draw[pending] (a1)--node[above] {$1$} (b1);
    \draw[pending] (b1)--node[above] {$4$} (d1);
    \draw[pending] (d1)--node[above] {$6$} (r2);

    \node[anchor=west] at (-.3,-.65) {$\ell(a),\ell(b),\ell(d)$ unassigned.};
\end{scope}

\begin{scope}[xshift=7cm]
    \node[anchor=west] at (-.3,.8) {(b) Event $s=1$: select $ab$.};
    \node[depot] (r1) at (0,0) {$u_1$};
    \node[client] (a1) at (1,0) {$a$};
    \node[client] (b1) at (2,0) {$b$};
    \node[client] (d1) at (3,0) {$d$};
    \node[depot] (r2) at (4,0) {$u_2$};

    \draw[pending] (r1)--node[above] {$3$} (a1);
    \draw[chosen] (a1)--node[above] {$1$} (b1);
    \draw[pending] (b1)--node[above] {$4$} (d1);
    \draw[pending] (d1)--node[above] {$9$} (r2);
    \node[anchor=west] at (-.3,-.65) {$\{a,b\}$ is still rootless; no labels assigned.};
\end{scope}

\begin{scope}[yshift=-2.5cm]
    \node[anchor=west] at (-.3,.8) {(c) Event $s=3$: select $u_1a$.};
    \node[depot] (r1) at (0,0) {$u_1$};
    \node[client,rooted] (a1) at (1,0) {$a$};
    \node[client,rooted] (b1) at (2,0) {$b$};
    \node[client] (d1) at (3,0) {$d$};
    \node[depot] (r2) at (4,0) {$u_2$};

    \draw[chosen] (r1)--node[above] {$3$} (a1);
    \draw[chosen] (a1)--node[above] {$1$} (b1);
    \draw[pending] (b1)--node[above] {$h$} (d1);
    \draw[pending] (d1)--node[above] {$9$} (r2);

    \node[anchor=west] at (-.3,-.65) {$\ell(a)=\ell(b)=3$; $\kappa(bd):4\longrightarrow h$.};
\end{scope}

\begin{scope}[xshift=7cm,yshift=-2.5cm]
    \node[anchor=west] at (-.3,.8) {(d) Event $s=h$: select $bd$; final forest.};
    \node[depot] (r1) at (0,0) {$u_1$};
    \node[client,rooted] (a1) at (1,0) {$a$};
    \node[client,rooted] (b1) at (2,0) {$b$};
    \node[client,rooted] (d1) at (3,0) {$d$};
    \node[depot] (r2) at (4,0) {$u_2$};

    \draw[chosen] (r1)--node[above] {$3$} (a1);
    \draw[chosen] (a1)--node[above] {$1$} (b1);
    \draw[chosen] (b1)--node[above] {$h$} (d1);

    \node[anchor=west] at (-.3,-.65) {$\ell(d)=h=\max\{h,1,3\}$.};
\end{scope}
\end{tikzpicture}
\end{figure}

\emph{An example.}
Take $\lambda=1/2$, two depots $u_1,u_2$, and three clients $a,b,d$ at positions $0,13,2,3,7$, respectively, on the real line, as shown in Figure~\ref{fig:event-example}.
The cost of every pair is the absolute difference of its positions; thus the graph is a metric graph.
Initially, the minimum key is $\kappa(ab)=c(ab)=1$.
Selecting $ab$ merges two rootless components, so neither client receives a label.
The next selected edge is $u_1a$, whose key is $3$.
It merges the component $\{a,b\}$ with $R_0$ and assigns both clients the label $3$.
The key of $bd$ then increases from $4$ to $h:=\frac32\cdot4-\frac12\cdot3=\frac92$.
The other edges that can merge the component containing $d$ with $R_0$ have keys
$\kappa(ad)=6$, $\kappa(u_1d)=21/2$, and $\kappa(u_2d)=9$.
Hence, $bd$ is selected last, and $\ell(d)=h$.
The event times are $1<3<h$.

\subsection{Properties and Running Time}\label{sec:forest-properties}
The forest construction defines the labels and selected edges.
We define an edge weight function 
\begin{equation}\label{eqfixed}
w = c + \lambda \theta_\ell.
\end{equation}
Although $w$ may not satisfy the triangle inequality, we have the following properties.

\begin{lemma}\label{lem:sweep}
For $\lambda\in[0,1]$, the forest construction computes $\F$ and $\ell$ in $\O(\size{V}^2\log\size{V})$ time such that $w(\F)=\MSF_w(D)$.
Moreover, each edge $e\in E(\F)$ has final weight $w(e)$ equal to its event time.
\end{lemma}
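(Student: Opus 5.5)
The proof has three parts: the event-time identity for selected edges, minimality of $\F$ under $w$, and the running time. Throughout I use the event implementation \eqref{eqevent}, whose event times are nondecreasing. By dual feasibility (Lemma~\ref{lem:dual-growth}), an edge joining distinct current components at time $s$ is not yet over-tight, so its current key is at least $s$. For an edge $e=uv$, let $t(e)$ be the first time at which $u$ and $v$ lie in the same component, counting all depots as one component $R_0$ from the start.

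\textbf{Step 1: selected edges have weight equal to their event time.} Let $uv\in E(\F)$ be selected at time $s$, and split on the kind of merge.
\begin{itemize}
\item \emph{Both endpoints rootless.} Then $s=c(uv)$. Both endpoints later receive labels at least $s$, by Property~\ref{labelpro} and monotonicity of event times. Hence $\theta_\ell(uv)=0$ and $w(uv)=c(uv)=s$.
\item \emph{$u\in R_0$ with $\ell(u)=a$, and $v$ rootless.} Then $v$ receives label $s$, and $a\le s$, so $\min\{\ell(u),\ell(v)\}=a$. Since $s=(1+\lambda)c(uv)-\lambda a\ge a$, we get $c(uv)\ge a$. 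Therefore $\theta_\ell(uv)=c(uv)-a$ and $w(uv)=(1+\lambda)c(uv)-\lambda a=s$. This also covers $u\in D$, where $a=0$.
\end{itemize}

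\textbf{Step 2: $w(e)\ge t(e)$ for every edge $e$.} This is trivial for depot--depot edges, where $t(e)=0$. Otherwise, consider the last moment before $t(e)$ at which $e$ joins distinct components; its key there is at least the current time, and we want to turn that key into a lower bound on $w(e)$. Split into cases.
\begin{itemize}
\item \emph{Endpoints merged while both are rootless.} Then $t(e)\le c(e)$, since otherwise $e$ would be over-tight. So $w(e)\ge c(e)\ge t(e)$.
\item \emph{$u$ rooted at time $a$ before $v$, and $v$ rooted at $b=t(e)$.} We must first have $c(uv)\ge a$; otherwise $uv$ would have become tight at time $c(uv)<a$ while both endpoints were rootless, and they would have been merged, a contradiction. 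Then $\min\{\ell(u),\ell(v)\}=a$ and $w(uv)=(1+\lambda)c(uv)-\lambda a=\kappa(uv)\ge b$, by feasibility up to time $b$.
\item \emph{Both endpoints rooted in the same event.} Then the first case applies.
\end{itemize}

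\textbf{Step 3: minimality.} Contract $D$ to a single vertex, so that RSFs correspond to spanning trees. For any non-forest edge $e$, the tree path between its endpoints consists of edges added by time $t(e)$. By Step~1 these edges have weight at most $t(e)$, and by Step~2 this is at most $w(e)$. Thus $\F$ satisfies the cycle optimality condition, and $w(\F)=\MSF_w(D)$.

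\textbf{Step 4: running time.} Keep all $\O(\size{V}^2)$ inter-component edges in a heap keyed by $\kappa$, and use union--find for the components. An edge's key changes only once, when its first endpoint becomes rooted. Rootless--rootless keys are static; after that change the key is fixed. Hence there are $\O(\size{V}^2)$ heap operations in total, using lazy deletion of internal edges. This gives $\O(\size{V}^2\log\size{V})$ time.

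The main obstacle is Step~2. It needs care to justify $c(uv)\ge \ell(u)$ for every edge whose endpoints are rooted at different times, and to handle ties and simultaneous events consistently with the tie-breaking rule.
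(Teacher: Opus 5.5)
Your proof is correct. Steps~1 and~4 are essentially the paper's parts (i) and (iii). The minimality argument, however, takes a different route.

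The paper proves a purely local inequality. At every stage, each candidate edge $f$ satisfies $\kappa(f)\le w(f)$: since $\min\{\ell(u),\ell(v)\}\le\ell(u)$ and $(x)_+\ge x$, one has $w(uv)\ge(1+\lambda)c(uv)-\lambda\ell(u)$. The selected edge $e$ has $w(e)=\kappa(e)\le\kappa(f)$, so it is a minimum-weight edge among all edges joining distinct current components. Hence each selection is a Kruskal (cut-property) step with $D$ contracted.

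You instead prove the global bound $w(e)\ge t(e)$ for every edge, using dual feasibility at the moment the endpoints of $e$ are merged. You then verify the cycle optimality condition.

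The paper's route is shorter and never tracks the history of unselected edges. The ``main obstacle'' you flag, proving $c(uv)\ge\ell(u)$ when the endpoints are rooted at different times, is avoidable even within your route. The bound $w(uv)\ge(1+\lambda)c(uv)-\lambda\ell(u)$ holds unconditionally, and feasibility at time $t(uv)$ then gives $(1+\lambda)c(uv)-\lambda\ell(u)\ge t(uv)$ directly.

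In exchange, your route yields a slightly stronger structural fact. Together with Step~1, it shows that for every $s$, the components of $G_s=(V,\{e\in E: w(e)\le s\})$, with $D$ contracted, coincide with the construction's components after all events of time at most $s$. This is exactly the identification the proof of Lemma~\ref{lem:budget} relies on when it relates $a(s)$ and the depot-containing components of $G_s$ to the construction. In the paper's route, that fact follows only implicitly, from the minimum-spanning-forest property of $\F$ together with Property~\ref{labelpro}.
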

\begin{proof}
Event times are nondecreasing by the dual-growth construction.
We first show that each selected edge has final weight equal to its event time.
We then prove the minimum-weight property and bound the running time.

\emph{(i) The final weights of selected edges.}
We use the labeling $\ell$ and the weight function in \eqref{eqfixed}.
If an edge $e=uv$ joins two rootless components at time $s$, then $s=c(e)$ and $\ell(u),\ell(v)\geq s$.
Thus, $\theta_\ell(e)=0$ and $w(e)=s$.
Next, suppose $e=uv$ merges a rootless component with $R_0$ through an already labeled vertex $u$.
The event time is $s=(1+\lambda)c(e)-\lambda \ell(u)$.
By Property~\ref{labelpro}, $s\geq \ell(u)$, and hence $c(e)\geq \ell(u)$.
Since $\ell(v)=s$, it follows that
\[
w(e)=c(e)+\lambda(c(e)-\ell(u))=(1+\lambda)c(e)-\lambda \ell(u)=s.
\]
Therefore, every selected edge has final weight equal to its selection time.

\emph{(ii) Optimality of the forest under $w$.}
At any stage, the current key of every candidate edge is a lower bound on its final weight.
For an edge between two rootless components, this follows from $w(e)=c(e)+\lambda\theta_\ell(e)\geq c(e)$ since its current key is $c(e)$.
For an edge from a labeled vertex $u$ to a rootless vertex $v$, Property~\ref{labelpro} implies $\ell(v)\geq \ell(u)$, so
\[
w(uv)=c(uv)+\lambda\pos{c(uv)-\ell(u)}\geq(1+\lambda)c(uv)-\lambda \ell(u)=\kappa(uv).
\]

As shown in (i), the selected edge $e$ has final weight equal to
its current key. Since $e$ minimizes the current key, every
candidate edge $f$ satisfies
\[
w(e)=\kappa(e)\leq\kappa(f)\leq w(f).
\]
Hence, $e$ is a minimum-weight edge among all edges joining
distinct current components, so each selection is a valid
Kruskal step under the final weights $w$ after treating
all depots as a single vertex.
After $|V|-k$ merges, the selected edges form an \ac{RSF} $\F$ with $w(\F)=\MSF_w(D)$.

\emph{(iii) Running time.}
The construction performs $\size{V}-k$ merges. A simple implementation scans all edges after each merge, which takes $\O(\size{V}^3)$ time.
Using the priority-queue approach of \citet[Section~2.3]{DBLP:journals/siamcomp/GoemansW95}, this can be reduced to $\O(\size{V}^2\log\size{V})$. Maintain all candidate edges in a min-heap and a union--find structure. Candidate keys never decrease, and the key of an edge can increase at most once, when one endpoint first joins $R_0$. Hence, keys can be updated lazily upon extraction: discard an edge whose endpoints are already in the same component, reinsert it if its key has increased, and otherwise select it. Since stored keys are lower bounds on their current values, the selected edge always has minimum current key. Thus, each edge incurs only $\O(1)$ heap operations and component queries, for a total of $\O\lra{(\size{E}+\size{V})\log\size{V}}=\O(\size{V}^2\log\size{V})$ time.
\end{proof}

\section{The Analysis}\label{sec:analysis}
In this section, we prove the approximation ratio of Algorithm~\ref{alg1} by comparing its cost with that of the optimal solution $\T^*$.
We first bound the matching cost using the vertex labels and the optimal tours $T_i^*$.
We then use the two-speed dual growth to bound the forest cost and the same label terms.
Finally, we balance the two growth rates and combine these bounds.

\subsection{Bounding $c(M)$}
\begin{lemma}\label{lem:parity}
It holds that $c(M)\le \frac12\OPT+\sum_i\max_{v\in V_i}\ell(v)+\theta_\ell(\F)$.
\end{lemma}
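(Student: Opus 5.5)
The plan is to exhibit an $\Odd(\F)$-join whose cost is at most the right-hand side of the lemma. By the triangle inequality, any $\Odd(\F)$-join can be shortcut into a perfect matching on $\Odd(\F)$ of no larger cost, so $c(M)$ is bounded by the cost of this join. The join has two parts: a set $X$ of forest edges running between different sets $V_i$, and, inside each $V_i$, a parity-correcting join taken from the optimal tour $T_i^*$.

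\emph{Choosing $X$.} Contract each $V_i$ to a single node. This turns the edges of $\F$ joining different sets $V_i$ into a multigraph $H$ on $k$ nodes. Write $\deg_{H}(i)$ for the degree of node $i$ in $H$, and $\deg_{X}(i)$ for its degree using only edges of $X$. Since $\sum_{v\in V_i}\deg_\F(v)=2\size{E(\F[V_i])}+\deg_H(i)$, we get $\size{\Odd(\F)\cap V_i}\equiv \deg_H(i)\pmod 2$. We pick $X\subseteq E(H)$ with $\deg_X(i)\equiv\deg_H(i)$ for every $i$ and $X$ acyclic in the contracted graph. To get it, start from $E(H)$ and repeatedly delete the edges of a cycle; this preserves every node parity and ends with a forest. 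Then $U:=\Odd(\F)\,\triangle\,\Odd(X)$ satisfies $\size{U\cap V_i}\equiv \deg_H(i)+\deg_X(i)\equiv 0 \pmod 2$ for every $i$.

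\emph{Joins inside the parts.} For each $i$, the set $U\cap V_i$ has even size and lies on the tour $T_i^*$. Walking around $T_i^*$, pair consecutive vertices of $U\cap V_i$. The arcs between these pairs, taken alternately, split $T_i^*$ into two classes of arcs, and each class is a $(U\cap V_i)$-join. Hence one of them costs at most $\frac12 c(T_i^*)$. Let $Y$ be the union over $i$ of these cheaper joins, so $c(Y)\le\frac12\OPT$. Then $X\uplus Y$ has odd set $\Odd(X)\triangle U=\Odd(\F)$, so it is an $\Odd(\F)$-join of cost $c(X)+c(Y)$.

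\emph{Bounding $c(X)$.} This is the crux. For every edge $e=uv$, the definition \eqref{eqtheta} gives $c(e)\le\min\{\ell(u),\ell(v)\}+\theta_\ell(e)$. Since $X\subseteq E(\F)$, summing over $X$ yields $c(X)\le\sum_{uv\in X}\min\{\ell(u),\ell(v)\}+\theta_\ell(\F)$. To handle the label sum, root each tree of the forest $X$ on the contracted nodes, and charge each edge of $X$ to its child node. Distinct edges are charged to distinct nodes, and an edge $uv$ charged to node $i$ has an endpoint in $V_i$. So its term satisfies $\min\{\ell(u),\ell(v)\}\le\max_{v'\in V_i}\ell(v')$, and the label sum is at most $\sum_i\max_{v\in V_i}\ell(v)$. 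Depot labels are $0$, which causes no trouble.

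Combining the three parts gives $c(M)\le\frac12\OPT+\sum_i\max_{v\in V_i}\ell(v)+\theta_\ell(\F)$. The step I expect to be the main obstacle is making $X$ acyclic in the contracted graph without changing parities, since that is exactly what makes each node absorb at most one label term. Without it, a crossing edge set with many edges incident to the same $V_i$ would overcharge that part.
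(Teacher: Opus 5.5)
Your proposal is correct and follows essentially the same route as the paper: contract the parts $V_i$, delete cycles among the crossing forest edges to get a parity-preserving forest, charge each retained edge's label term to its child part, and correct parities inside each $V_i$ at cost at most $c(T_i^*)/2$. The only cosmetic difference is that you build the within-part join directly from alternating arcs of $T_i^*$ instead of citing the Christofides perfect-matching bound and then converting a join to a matching, which is the same underlying argument.
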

\begin{proof}
We construct an $\Odd(\F)$-join by retaining edges of $\F$ between the sets $V_i$ and correcting parity within each $V_i$ using $T_i^*$.

\emph{(i) Edges between the sets $V_i$.}
Contract each $V_i$ in $\F$ into a single supervertex, and repeatedly delete loops and cycles, including two-edge cycles formed by parallel edges. These deletions preserve the degree parity of every supervertex and leave a forest. Let $Y$ be the spanning graph on $V$ consisting of the retained edges. Root each tree of the contracted forest arbitrarily. For a retained edge $uv$ directed from its parent part $V_i$ to its child part $V_j$, by the definition of $\theta_\ell$ in \eqref{eqtheta}, 
\[
 c(uv)\leq\min\{\ell(u),\ell(v)\}+\theta_\ell(uv)
 \leq\max_{x\in V_j}\ell(x)+\theta_\ell(uv).
\]
Charge the label-maximum term to the child part $V_j$. Each part is the child of at most one edge, so these terms sum to at most $\sum_{i=1}^k\max_{v\in V_i}\ell(v)$.
Since $Y\subseteq E(\F)$,
\begin{equation}\label{eqcontracted-cost}
 c(Y)\leq \sum_{i=1}^k\max_{v\in V_i}\ell(v)+\sum_{e\in E(Y)}\theta_\ell(e)
 \leq \sum_{i=1}^k\max_{v\in V_i}\ell(v)+\theta_\ell(\F).
\end{equation}

\emph{(ii) Parity correction within each $V_i$.}
Set $W=\Odd(\F)\triangle\Odd(Y)$, where $\triangle$ denotes symmetric difference.
Since the deletions preserve the parity of the number of edges leaving each $V_i$, $\size{W\cap V_i} \equiv \size{\Odd(\F)\cap V_i} +\size{\Odd(Y)\cap V_i} \equiv 0 \pmod 2$.
Since $\size{W\cap V_i}$ is even, by the triangle inequality, there exists a perfect matching $M_i$ on $W\cap V_i$ with $c(M_i)\leq c(T_i^*)/2$~\citep{christofides1976worst}.

Finally, let $H=(V,Y\uplus\biguplus_iM_i)$, which satisfies $\Odd(H)=\Odd(Y)\triangle W=\Odd(\F)$.
By \eqref{eqcontracted-cost},
\[
c(H)\leq \sum_{i=1}^k\max_{v\in V_i}\ell(v)+\theta_\ell(\F)+\frac12\sum_i c(T_i^*)=\sum_{i=1}^k\max_{v\in V_i}\ell(v)+\theta_\ell(\F)+\frac12\OPT.
\]

Since $c$ is a metric, a minimum-cost $\Odd(\F)$-join has the same cost as a minimum-cost perfect matching on $\Odd(\F)$~\citep{lawler2001combinatorial}. Therefore, $c(M)\leq c(H)$, which proves the lemma.
\end{proof}

\subsection{Bounding the Forest Cost and Vertex Labels}

\begin{lemma}\label{lem:budget}
For $\lambda\in[0,1]$ and $\eta=\frac{1-\lambda}{1+\lambda}$, it holds that $c(\F)+\lambda\theta_\ell(\F)+\eta\sum_{i=1}^k\max_{v\in V_i}\ell(v)\leq\OPT$.
\end{lemma}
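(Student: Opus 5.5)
The plan is to combine dual feasibility (Lemma~\ref{lem:dual-growth}) with the optimal tours $T_1^*,\ldots,T_k^*$ directly, rather than going through weak duality with the objective $2\sum_S y_S$. A plain weak-duality argument would only give the term $\eta\max_v\ell(v)$, whereas we need the larger term $\eta\sum_i\max_{v\in V_i}\ell(v)$. So we keep track of how many times each dual set is crossed by the optimal solution.

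\emph{Step 1: charge $\OPT$ to the duals.} Let $y$ be the final dual solution. Dual feasibility gives $c(e)\ge\sum_{S:\,e\in\delta(S)}y_S$ for every edge. Summing this over the edge multiset of $\T^*$ yields
\[
\OPT\ \ge\ \sum_{\emptyset\ne S\subseteq J} y_S\,\size{E(\T^*)\cap\delta(S)}.
\]
We then write each $y_S$ as the integral of the growth rates \eqref{eqgrowth} over time. This splits the right-hand side into two parts. The first is a rootless part $\int_0^\infty\frac12\sum_{X\in\A_s}\size{E(\T^*)\cap\delta(X)}\,ds$. The second is a root part $\int_0^\infty\frac\eta2\,\mathbf 1_{\{A_s\ne\emptyset\}}\size{E(\T^*)\cap\delta(A_s)}\,ds$. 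When only one rootless component remains, both terms are present and simply add, so linearity handles that case.

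\emph{Step 2: lower-bound the crossings.} Each $X\in\A_s$ is a nonempty set of clients. Every client lies on a tour through a depot outside $X$, so $\size{E(\T^*)\cap\delta(X)}\ge2$. Hence the rootless part is at least $\int_0^\infty N_s\,ds$, where $N_s=\size{\A_s}$. For the root part, note that $A_s$ is the set of clients with $\ell(v)>s$ (Property~\ref{labelpro}). Each tour $T_i^*$ with $V_i\cap A_s\ne\emptyset$ contains its depot outside $A_s$, so it crosses $\delta(A_s)$ at least twice. The tours are vertex-disjoint, so these crossings are distinct edges, and thus $\size{E(\T^*)\cap\delta(A_s)}\ge 2\,\size{\{i:\max_{v\in V_i}\ell(v)>s\}}$. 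Integrating over $s$ gives exactly $\eta\sum_{i=1}^k\max_{v\in V_i}\ell(v)$.

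\emph{Step 3: identify $\int N_s\,ds$ with the forest weight.} This is the step I expect to need the most care. Treat all depots as a single vertex, so that $\F$ is a spanning tree and there are $N_s+1$ components at time $s$. Then $N_s$ equals the number of selected edges whose event time exceeds $s$, up to a measure-zero set of event times; simultaneous events and zero-cost edges at time $0$ need to be checked here. Therefore $\int_0^\infty N_s\,ds=\sum_{e\in E(\F)}(\text{event time of }e)$. By Lemma~\ref{lem:sweep}, the event time of $e$ is $w(e)=c(e)+\lambda\theta_\ell(e)$, so this sum equals $c(\F)+\lambda\theta_\ell(\F)$. Combining the three steps gives the claimed inequality.
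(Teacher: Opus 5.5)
Your proposal is correct and follows the paper's proof essentially step for step. You bound $\OPT$ by $\sum_S y_S\size{E(\T^*)\cap\delta(S)}$, split this via the growth rates, use at least two crossings for each rootless component and for each tour meeting $A_s$, and identify the rootless integral with $w(\F)$ via Lemma~\ref{lem:sweep}. The only difference is cosmetic: you count $\size{\A_s}$ directly through event times, which avoids the paper's detour through the threshold graph $G_s=(V,\{e:w(e)\le s\})$ and its implicit identification of $a(s)$ with $\size{\A_s}$.
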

\begin{proof}
For $s\geq0$, let $G_s=(V,\{e\in E:w(e)\leq s\})$, and let $a(s)$ be the number of its components containing no depot.
By Property~\ref{labelpro}, a vertex $v$ lies in a depot-containing component of $G_s$ if and only if $\ell(v)\leq s$. After treating all depots as a single component, $G_s$ has $a(s)+1$ components.
Note that $\F$ contains exactly $a(s)$ edges of weight greater than $s$. Thus, 
\begin{equation}\label{lem4eq1}
\int_0^\infty a(s)ds = w(\F)=c(\F)+\lambda\theta_\ell(\F).
\end{equation}

Let $\lrc{y_S}$ be the final dual solution, which is feasible by Lemma~\ref{lem:dual-growth}.
For each $e\in E$, let $m_e$ be its multiplicity in $E(\T^*)$.
By \eqref{eqdual},
\begin{equation}\label{lem4eq2}
\OPT=\sum_{e\in E}c(e)m_e\geq\sum_{e\in E}m_e\sum_{\substack{\emptyset\neq S\subseteq J\\e\in\delta(S)}}y_S=\sum_{\emptyset\neq S\subseteq J}y_S\size{E(\T^*)\cap\delta(S)}.
\end{equation}

By the growth rules in \eqref{eqgrowth},
\begin{equation}\label{lem4eq3}
\sum_{\emptyset\neq S\subseteq J}y_S\size{E(\T^*)\cap\delta(S)} = \int_0^\infty\left(
\frac12\sum_{X\in\A_s}\size{E(\T^*)\cap\delta(X)}
+\frac\eta2\size{E(\T^*)\cap\delta(A_s)}
\right)\,ds.
\end{equation}

\emph{(i) Contributions of rootless components.}
For each $X\in\A_s$, choose a client in $X$ and its tour in $\T^*$.
The tour visits that client but starts and ends at a depot outside $X$, so it must enter and leave $X$.
Thus, $\size{E(\T^*)\cap\delta(X)}\geq2$, and $\frac12\sum_{X\in\A_s}\size{E(\T^*)\cap\delta(X)}\geq a(s)$.
By \eqref{lem4eq1}, 
\begin{equation}\label{lem4eq4}
\int_0^\infty
\frac12\sum_{X\in\A_s}\size{E(\T^*)\cap\delta(X)}\,ds\geq c(\F)+\lambda\theta_\ell(\F).
\end{equation}

\emph{(ii) Contributions of the root component.}
By Property~\ref{labelpro}, if $\max_{v\in V_i}\ell(v)>s$, then $T_i^*$ contains a client in $A_s$, while its depot lies outside $A_s$.
Thus, $T_i^*$ crosses $\delta(A_s)$ at least twice, so $\size{E(\T^*)\cap\delta(A_s)}\geq 2\size{\{i:\max_{v\in V_i}\ell(v)>s\}}$. Therefore,
\begin{equation}\label{lem4eq5}
\int_0^\infty
\frac\eta2\size{E(\T^*)\cap\delta(A_s)}\,ds\geq\int_0^\infty
\eta\size{\{i:\max_{v\in V_i}\ell(v)>s\}}\,ds=\eta\sum_{i=1}^k\max_{v\in V_i}\ell(v),
\end{equation}
where the equality follows from $\int_0^\infty \mathbf{1}_{\{a>s\}}\,ds=a$ for $a\geq0$.

By \eqref{lem4eq2}--\eqref{lem4eq5}, we obtain the inequality in the lemma.
\end{proof}

\subsection{Analyzing the Approximation Ratio}
Lemma~\ref{lem:parity} contains two additional terms, $\theta_\ell(\F)$ and $\sum_{i=1}^k\max_{v\in V_i}\ell(v)$. We use Lemma~\ref{lem:budget} to bound their sum and choose the parameter to obtain the following result.

\begin{lemma}\label{lem:branch}
When $\lambda=\sqrt{2}-1$, it holds that $c(\T)\leq c(\F)+c(M)
\leq \lrA{\frac32+\sqrt2}\OPT-\sqrt2c(\F)$.
\end{lemma}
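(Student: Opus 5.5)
The plan is to combine \eqref{eqnotations} with Lemma~\ref{lem:parity} and Lemma~\ref{lem:budget}, and to choose $\lambda$ so that both extra label-dependent terms carry the same coefficient in the dual budget. Write $L=\sum_{i=1}^k\max_{v\in V_i}\ell(v)$ and $\Theta=\theta_\ell(\F)$. When $\lambda=\sqrt2-1$ we have $1+\lambda=\sqrt2$, so
\[
\eta=\frac{1-\lambda}{1+\lambda}=\frac{2-\sqrt2}{\sqrt2}=\sqrt2-1=\lambda.
\]
Lemma~\ref{lem:budget} then reads $c(\F)+(\sqrt2-1)(\Theta+L)\leq\OPT$. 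Since $\sqrt2-1>0$, this gives
\[
\Theta+L\leq\frac{\OPT-c(\F)}{\sqrt2-1}=(\sqrt2+1)\bigl(\OPT-c(\F)\bigr).
\]

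Next, I plug this into Lemma~\ref{lem:parity}:
\[
c(M)\leq\tfrac12\OPT+L+\Theta\leq\tfrac12\OPT+(\sqrt2+1)\OPT-(\sqrt2+1)c(\F).
\]
Adding $c(\F)$ to both sides gives
\[
c(\F)+c(M)\leq\bigl(\tfrac32+\sqrt2\bigr)\OPT-\sqrt2\,c(\F).
\]
Prepending $c(\T)\leq c(\F)+c(M)$ from \eqref{eqnotations} yields the full chain in the statement.

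There is no real obstacle here; it is only bookkeeping. The one point to check is that dividing by $\eta=\lambda=\sqrt2-1$ is legitimate, which holds because this value is positive. It also matters that both lemmas use the same labels $\ell$ and the same forest $\F$ produced with this $\lambda$, which is the case since both were stated for the output of the construction in Section~\ref{sec:sweep}.
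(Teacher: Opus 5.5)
Your proposal is correct and follows essentially the same route as the paper. The paper bounds $\theta_\ell(\F)+\sum_i\max_{v\in V_i}\ell(v)$ by $(\OPT-c(\F))/\min\{\lambda,\eta\}$ and then picks $\lambda$ to maximize $\min\{\lambda,\eta\}$; you instead substitute $\lambda=\eta=\sqrt2-1$ directly, and the resulting chain of inequalities is the same.
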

\begin{proof}
By Lemma~\ref{lem:budget}, $\theta_\ell(\F)+\sum_{i=1}^k\max_{v\in V_i}\ell(v)\leq \frac{\OPT-c(\F)}{\min\{\lambda,\eta\}}$.
Then, by \eqref{eqnotations} and Lemma~\ref{lem:parity},
\[
c(\T)\leq c(\F)+c(M)
\leq c(\F)+\frac12\OPT
+\frac{\OPT-c(\F)}{\min\{\lambda,\eta\}}.
\]
Since $\lambda\in[0,1]$ and $\eta=\frac{1-\lambda}{1+\lambda}$, to maximize $\min\{\lambda,\eta\}$, we choose $\lambda=\frac{1-\lambda}{1+\lambda}$, and then $\lambda=\sqrt2-1$.
In this case, we obtain $c(\T)\leq c(\F)+c(M) \leq \lra{\frac32+\sqrt2}\OPT-\sqrt2c(\F)$.
\end{proof}

\begin{theorem}\label{thm:main}
The \ac{MDTSP} admits an $\O(\size{V}^3)$-time $\lra{1+1/\sqrt{2}}$-approximation algorithm.
\end{theorem}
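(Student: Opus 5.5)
We run Algorithm~\ref{alg1} with $\lambda=\sqrt2-1$, so that $\eta=\lambda$, and combine two upper bounds on $c(\T)$.

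\emph{First bound.} Lemma~\ref{lem:branch} gives
\[
c(\T)\leq\lrA{\tfrac32+\sqrt2}\OPT-\sqrt2\,c(\F).
\]

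\emph{Second bound.} We compare the matching with the forest. The forest $\F$ is itself an edge set whose odd-degree vertex set contains $\Odd(\F)$. More precisely, inside each tree of $\F$ the odd-degree vertices can be paired along tree paths so that the paths are edge-disjoint. This is the standard fact that every forest contains a $T$-join for any even-size $T$ in each component; $\size{\Odd(\F)\cap C}$ is even in each component $C$. Shortcutting these paths via the triangle inequality gives a perfect matching on $\Odd(\F)$ of cost at most $c(\F)$. Hence $c(M)\leq c(\F)$, and by \eqref{eqnotations},
\[
c(\T)\leq 2c(\F).
\]

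\emph{Combining.} We take the better of the two bounds as a function of $x=c(\F)/\OPT$. We have
\[
c(\T)/\OPT\leq\min\lrC{2x,\ \tfrac32+\sqrt2-\sqrt2x}.
\]
The first expression increases in $x$ and the second decreases, so the minimum is largest where they are equal. Setting $2x=\tfrac32+\sqrt2-\sqrt2x$ gives $x=(\tfrac32+\sqrt2)/(2+\sqrt2)$. Multiplying numerator and denominator by $2$ gives $x=(3+2\sqrt2)/(2(2+\sqrt2))$. Since $(2+\sqrt2)(1+\tfrac1{\sqrt2})=3+2\sqrt2$, this equals $\tfrac12+\tfrac1{2\sqrt2}$. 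Therefore the ratio is at most $2x=1+1/\sqrt2$. Equivalently, we could write $c(\T)$ as a convex combination of the two bounds, with weights chosen to cancel the $c(\F)$ term, and obtain the same constant directly.

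\emph{Running time.} By Lemma~\ref{lem:sweep}, constructing $\F$ and $\ell$ takes $\O(\size{V}^2\log\size{V})$ time. A minimum-cost perfect matching on $\Odd(\F)$ takes $\O(\size{V}^3)$ time, for example by Gabow's implementation of Edmonds' algorithm. Computing the Eulerian tours and shortcutting takes linear time in the size of $E(\F)\uplus M$, which is $\O(\size{V})$. The total is $\O(\size{V}^3)$.

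\emph{Feasibility of the output.} Each component of $E(\F)\uplus M$ has even degree everywhere. Each component also contains a depot, because every tree of $\F$ contains one and $M$ only merges trees. Hence every client lies in a tour starting at a depot, and unused depots get trivial tours, so $\T$ is feasible.

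The only step requiring care is $c(M)\leq c(\F)$. It relies on the evenness of $\Odd(\F)$ within each tree and on metric shortcutting, both of which are routine.
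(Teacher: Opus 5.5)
Your proposal is correct and follows essentially the same route as the paper. It gets $c(M)\le c(\F)$ from the fact that $E(\F)$ is itself an $\Odd(\F)$-join, combines this with Lemma~\ref{lem:branch}, balances the two bounds, and attributes the $\O(\size{V}^3)$ running time to the matching step. The paper does the balancing directly via the convex combination with weights $\sqrt2-1$ and $2-\sqrt2$, which also avoids dividing by $\OPT$ in the degenerate case $\OPT=0$; your explicit feasibility check of the output is a small addition the paper leaves implicit.
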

\begin{proof}
We apply Algorithm~\ref{alg1} with $\lambda=\sqrt2-1$.
We first analyze the approximation ratio.
Since $E(\F)$ is an $\Odd(\F)$-join, $c(M)\le c(\F)$, and thus $c(\F)+c(M)\leq2c(\F)$.
Then, by Lemma~\ref{lem:branch},
\begin{align*}
c(\F)+c(M)&\leq\min\lrC{2c(\F),\lra{3/2+\sqrt2}\OPT-\sqrt2c(\F)}\\
&\leq\lra{\sqrt{2}-1}\cdot 2c(\F) + \lra{2-\sqrt{2}}\cdot\lrB{\lra{3/2+\sqrt2}\OPT-\sqrt2c(\F)}=\lra{1+1/\sqrt2}\OPT.
\end{align*}

We now analyze the running time of Algorithm~\ref{alg1}.
Line~\ref{alg1l2} takes $\O(\size{V}^2\log|V|)$ time by Lemma~\ref{lem:sweep}, line~\ref{alg1l3} takes $\O(\size{V}^3)$ time, dominated by computing a minimum-cost perfect matching~\citep{galil1986matching}, and lines~\ref{alg1l4}--\ref{alg1l8} take $\O(\size{V})$ time~\citep{christofides1976worst}.
Thus, the total running time is $\O(\size{V}^3)$.
\end{proof}

\section{Conclusion}\label{sec:conclusion}

In this paper, we proposed an $\O(\size{V}^3)$-time $(1+1/\sqrt{2})$-approximation algorithm for the \ac{MDTSP}, improving the previous approximation ratio of $2$ when the number of depots is part of the input.
Our algorithm uses a two-speed primal-dual approach to construct an \ac{RSF}, together with vertex labels that connect the forest construction to the parity-correction analysis.
By combining the bounds on the forest and matching costs and balancing the two growth rates, we obtain the claimed approximation ratio.
It would be interesting to investigate whether the primal-dual approach developed in this paper can be extended to other multiple-depot routing problems.

\bibliographystyle{apalike} 
\bibliography{main} 
\end{document}